\newcommand{\address}[1]{\vbox{\center\em#1}}
\numberwithin{equation}{section}% [if desired]
\theoremstyle{plain} %default
\theoremstyle{definition}
\newtheorem{thm}{Theorem}[section]
\newtheorem{lem}[thm]{Lemma}
\theoremstyle{definition}
\newtheorem{defn}{Definition}[section]
\theoremstyle{remark}
\newcommand{\R}{\mathbb{R}}
\newcommand{\C}{\mathbb{C}}
\renewcommand{\P}{\mathbb{P}}
\newcommand{\Z}{\mathbb{Z}}
\DeclareMathOperator{\Tr}{Tr}
\DeclareMathOperator{\Spec}{Spec}
\DeclareMathOperator{\End}{\mathrm{End}}
\DeclareMathOperator{\Hom}{\mathrm{Hom}}
\DeclareMathOperator{\Irr}{\mathrm{Irr}}
\DeclareMathOperator{\rad}{\mathrm{rad}}
\DeclareMathOperator{\wt}{\mathrm{wt}}
\begin{document}
\begin{titlepage}
%\rightline{\small{\tt NSF-KITP-09-111}}

\begin{center}
\vskip 9mm
{\LARGE
Brane Tilings and Non-Commutative Geometry
}
\vskip 10mm
Richard Eager${}$\footnote{\href{mailto:reager@physics.ucsb.edu}{\rm
reager@physics.ucsb.edu}}
\vskip 5mm
\address{
${}$
Department of Physics, University of California,\\
Santa Barbara, CA 93106, USA
}
\vskip 6mm

\end{center}

\abstract{
\noindent \normalsize{
We derive the quiver gauge theory on the world-volume of D3-branes transverse to an $L^{a,b,c}$ singularity by computing the endomorphism algebra of a tilting object first constructed by Van den Bergh.   The quiver gauge theory can be concisely specified by an embedding of a graph into a face-centered cubic lattice.  In this description, planar Seiberg dualities of the gauge theory act by changing the graph embedding.  We use this description of Seiberg duality to show these quiver gauge theories possess periodic Seiberg dualities whose existence were expected from the AdS/CFT correspondence.
 } }

\vfill

\end{titlepage}

\setcounter{footnote}{0}

{\addtolength{\parskip}{-1ex}\tableofcontents}

%%%%%%%%%%%%%%%%%%%%%%%%%%%%%%%%%%%
\section{Introduction}
%%%%%%%%%%%%%%%%%%%%%%%%%%%%%%%%%%%
D-branes at singularities provide a geometric description of gauge theories and are a useful tool for string theory model building.  The low energy effective physics on a stack of D3-branes transverse to a Calabi-Yau singularity has a limit where the closed string modes decouple leaving only an effective gauge theory living on the D3-brane world-volume governed by open string modes.  Given a Calabi-Yau singularity, determining the explicit form of the Lagrangian describing the D3-brane world-volume gauge theory is important for model building applications and precise checks of gauge-gravity duality.  For a general Calabi-Yau manifold, the world-volume gauge theory will have $\mathcal{N}=1$ supersymmetry and can be encoded in a superpotential and K\"{a}hler potential.  While the K\"{a}hler potential depends on how the singularity is embedded inside a compact Calabi-Yau manifold, the superpotential depends only on the local complex structure of the singularity.  Although there are several methods to compute superpotentials, the method of Aspinwall and Katz \cite{Aspinwall:2004bs} is the only one applicable to arbitrary Calabi-Yau singularities.  Despite their method's great generality, its use has been limited by computational difficulties.  In practice, the method of brane tilings \cite{Franco:2005rj,Hanany:2005ve} has been used to determine the gauge theories for local toric Calabi-Yau singularities.  While only applicable to toric singularities, brane tilings have yielded several fascinating results whose relation to the Aspinwall-Katz method has so far been obscure.  In particular, the world-volume gauge theory of a toric singularity should be described by a periodic quiver.  For the infinite families of Calabi-Yau singularities given by cones over the Sasaki-Einstein spaces $Y^{p,q}$ and $L^{a,b,c}$, we will see how Aspinwall's method of constructing a tilting object \cite{Aspinwall:2008jk} leads to a periodic quiver.
For toric Calabi-Yau singularities that are cones over Fano surfaces, the relationship between brane tilings and tilting objects has recently been explained in \cite{Carqueville:2009xu}.

After Maldacena proposed the AdS/CFT correspondence between the world-volume gauge theory on coincident D3-branes in Minkowski space and type IIB string theory on $AdS_5 \times S^5$ \cite{Maldacena:1997re}, it was subsequently generalized to a correspondence between the world-volume gauge theory on coincident D3-branes transverse to a Calabi-Yau singularity and type IIB string theory on $AdS_5 \times L^5$, where $L^5$ is a Sasaki-Einstein manifold.  In particular, $L^5$ is found by taking the near-horizon limit on the stack of D3-branes and the original Calabi-Yau manifold is a cone with base $L^5.$  Shortly thereafter, the correspondence was extended to orbifolds of $\C^3$ \cite{Kachru:1998ys, Douglas:1996sw}.  Klebanov and Witten \cite{Klebanov:1998hh} showed how to determine the superpotential for the conifold singularity by matching the R-symmetries of the gauge theory to its AdS/CFT dual.  Morrison and Plesser \cite{Morrison:1998cs} re-derived the conifold superpotential starting with the superpotential of an orbifold singularity and following how it changed under partial resolutions of the singularity.  Their derivation was turned into a systematic method for computing the superpotential of any local toric manifold \cite{Beasley:1999uz} \cite{Feng:2000mi}.  Surprisingly, the quivers obtained through this algorithmic procedure could be simply described by dimer models and brane tilings \cite{Hanany:2005ve, Franco:2005rj}.  Brane tilings for several examples of $L^{a,b,c}$ quivers greatly simplified further checks of the AdS/CFT correspondence with the newly discovered metrics for the $Y^{p,q}$ \cite{Gauntlett:2004yd} and $L^{a,b,c}$ \cite{Cvetic:2005ft}  families of singularities \cite{Franco:2005sm,Benvenuti:2005ja,Butti:2005sw}.  A summary of more recent results on brane tilings is contained in the two excellent reviews \cite{Kennaway:2007tq} and \cite{Yamazaki:2008bt}.

Hanany, Herzog, and Vegh \cite{Hanany:2006nm} explain how to construct a brane tiling from an exceptional collection of line bundles.  Our construction is similar in spirit, except we work directly with the singular geometry instead of a smooth resolution.  The primary difficulty in either method is finding an exceptional collection or a tilting object.  Methods for computing tilting objects primarily rely on local cohomology and are described in \cite{MR1087057} and \cite{Aspinwall:2008jk}.  Tilting objects for the $L^{a,b,c}$ family of singularities were constructed by Van den Bergh  \cite{MR2057015, MR1087057}.  We will compute the endomorphism algebra of these titling objects and show how the endomorphism algebra determines a quiver gauge theory.

Extrapolating from several examples of $L^{a,b,c}$ gauge theories constructed in \cite{Franco:2005sm}, several of the properties of the $L^{a,b,c}$ quiver gauge theories have been determined.  We will verify that the total number of fields with a fixed R-charge is in agreement with the results in \cite{Franco:2005sm, Benvenuti:2005ja, Butti:2005vn}.  Furthermore, we will show that all of these quiver gauge theories possess Seiberg dualities that leave the quiver invariant after relabeling its nodes.  This provides a simple description of the quiver and is the first step in verifying the existence of a duality cascade whose existence was conjectured in \cite{Butti:2005sw, Brini:2006ej, Evslin:2007au}.  We expect that the cascade will have similar qualitative features to the cascade for the $Y^{p,q}$ family  \cite{Herzog:2004tr, Benvenuti:2004wx} and its supergravity dual \cite{Herzog:2004tr}.  

We first define periodic quivers and explain how they encode the structure of a quiver gauge theory.  In section \ref{sec:torus}, we will introduce the $L^{a,b,c}$ family of local toric geometries that we are studying.  To determine the associated quiver gauge theory, we introduce the method of tilting in section \ref{sec:tilting} and illustrate the method with the conifold and $Y^{2,1}$ in section \ref{sec:examples}.  In section \ref{sec:morphisms}, we show that the quiver obtained from tilting is periodic.  We find that all of the $L^{a,b,c}$ gauge theories can be described by defining a doubly periodic integer-valued height function on $\Z^2.$
In section \ref{sec:Seiberg} we show all planar Seiberg dualities can be described by increasing or decreasing the height function of a vertex by 2.  This allows us to show that all of the $L^{a,b,c}$ quiver gauge theories possess periodic Seiberg dualities whose existence is necessary for duality cascades to exist.  Finally, we suggest possible future extensions of this work in section \ref{sec:conclusion}.
%%%%%%%%%%%%%%%%%%%%%%%%%%%%%%%%%%%
%%%%%%%%%%%%%%%%%%%%%%%%%%%%%%%%%%%
\section{Quiver Gauge Theories}
%%%%%%%%%%%%%%%%%%%%%%%%%%%%%%%%%%%
The world-volume gauge theory on a stack of D3-branes at a Calabi-Yau singularity is often described by a quiver gauge theory.  A {\it quiver} $Q = (V,A,h,t: A \rightarrow V)$ is a collection of vertices $V$ and arrows $A$ between the vertices of the quiver.  The arrows are directed edges with the head and tail of an arrow $a \in A$ given by maps $h(a)$ and $t(a)$, respectively.  A {\it representation} $X$ of a quiver is an assignment of $\C$-vector spaces $X_{v}$ to every vertex $v \in V$ and a $\C-$linear map $\phi_{a}: X_{t(a)} \rightarrow X_{h(a)}$ to every arrow $a \in A.$  The {\it dimension vector} $\mathbf{n} \in \mathbb{N}^{|V|}$ of a representation $X$ is a vector with an entry for each vertex $v \in V$ equal to the dimension of the vector space $X_v.$

A {\it quiver gauge theory} is specified by a quiver and superpotential in the following manner:
\begin{itemize}
\item The gauge group
$$G = \prod_{v \in V} U(n_v)$$
is a product of unitary groups $U(n_v)$ of dimension $n_v.$
\item Arrows $a \in A$ represent chiral superfields $\Phi_a$ transforming in the fundamental representation of $U(n_{h(a)})$ and in the anti-fundamental representation of $ U(n_{t(a)})$.  If the two vertices are distinct the chiral superfields are called {\it bifundamental} fields.  Otherwise, the arrow is a loop and the field transforms in the adjoint representation.
\item The superpotential
$$W = \sum_{l = a_1 a_2 \dots a_k \in L} \lambda_{l} \Tr \left[ \Phi_{a_1} \Phi_{a_2} \dots \Phi_{a_k} \right]$$
is a sum of gauge invariant operators $\Tr \left[ \Phi_{a_1} \Phi_{a_2} \dots \Phi_{a_k} \right].$  Gauge invariance requires $l = a_1 a_2 \dots a_k$ to be an oriented loop in the quiver.  Each operator has coupling constant $\lambda_{l}.$
\end{itemize}
%%%%%%%%%%%%%%%%%%%%%%%%%%%%%%%%%%%
%%%%%%%%%%%%%%%%%%%%
% L^{2,6,3} Brane Tiling
%%%%%%%%%%%%%%%%%%%%
\begin{figure}
\begin{center}
\includegraphics[trim = 0mm 1mm 0mm 1mm, clip,width=12cm]{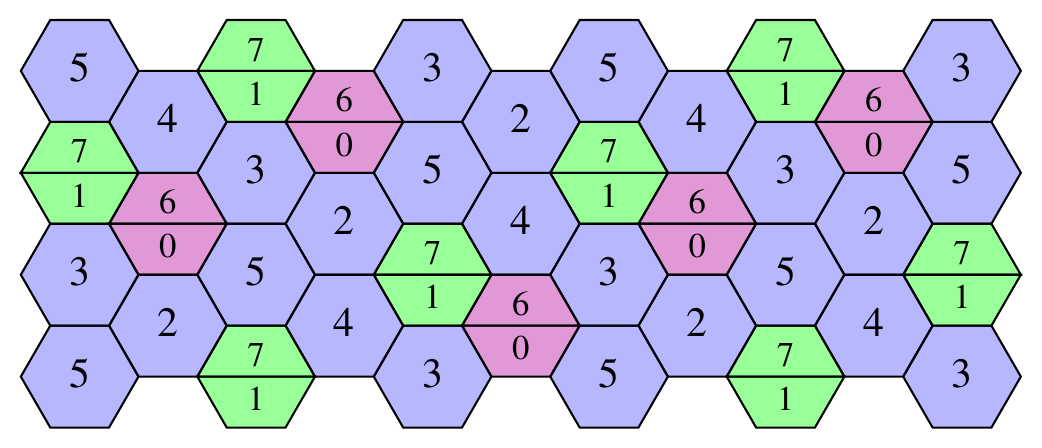}
\caption{Brane tiling for $L^{2,6,3}$}
\label{fig:263tiling}
\end{center}
\end{figure}

Brane tilings form a particularly simple class of quiver gauge theories whose superpotential is easily determined from the following graphical representation.  A {\it brane tiling} is a bipartite graph $G = (G_0^{\pm}, G_1)$ with an embedding into the two-torus such that the faces form a tiling of the torus.  An example is shown in figure \ref{fig:263tiling}.  From a brane tiling, we can form a {\it periodic quiver} $Q = (Q_0, Q_1,Q_2, h, t),$ which is the dual graph.  The vertices $Q_0$ of the periodic quiver are the faces of the brane tiling.  The faces $Q_2 = Q_2^{+} \cup Q_2^{-}$ are dual to the vertices of the brane tiling and the faces $Q_2^{+}$ and $Q_2^{-}$ are oriented clockwise and counterclockwise respectively.  The arrows $Q_1$ follow the orientation of the faces.  The functions $h: Q^1 \rightarrow Q^{0}$ and $t: Q^1 \rightarrow Q^{0}$ specify the head and tail of each arrow in the quiver.

Associated to any quiver is its path algebra $\C Q$, where the multiplication of two paths $\gamma_1$ and $\gamma_2$ is their concatenation $\gamma_2 \gamma_1$ if $t(\gamma_1) = h(\gamma_2)$ and zero otherwise.  The commutator subalgebra $[\C Q,\C Q]$ is spanned by cyclic words.  A superpotential $W$ is an element of $\C Q / [ \C Q, \C Q].$
% which a basis consisting of words in the path algebra up to cyclic permutation.
A word in $\C Q / [ \C Q, \C Q]$ can be embedded into $\C Q$ by summing over all of its cyclic permutations and the map can be extended to the entire vector space by linearity.  The quiver gauge theory associated to a brane tiling has superpotential
$$W = \sum_{F \in Q_{2}^{+}} w_F - \sum_{F \in Q_2^{-}} w_F$$
where $Q_2^{+}$ and $Q_2^{-}$ are the positively and negatively oriented faces and $w_F$ is the product of fields going around a face $F.$
The F-term relations are encoded by the Jacobian ideal $( \partial W)$ and representations of the {\it superpotential algebra} $\mathcal{A} = \C Q/ (\partial W)$ encode the moduli space of vacua of a 4D $\mathcal{N} = 1$ supersymmetric field theory \cite{Luty:1995sd}.

For a quiver gauge theory to be physically sensible, the gauge anomalies for each gauge group must vanish.
Vanishing of the triangle anomaly with three external gluons of the $U(n_v)$ gauge group yields the condition
\begin{equation}
\sum_{a \in A | h(a) = v} n_{t(a)} - \sum_{a \in A | t(a) = v} n_{h(a)} = 0.
\label{eqn:rank}
\end{equation}
Linear combinations $U(1)_q$ of the $U(1)_v \subset U(n_v)$ groups can mix and lead to triangle anomalies of the form $\Tr \left[ SU(n_v)^2 U(1)_q \right].$  Vanishing of this mixed anomaly requires
\begin{equation}
\label{eqn:baryon}
\sum_{a \in A | h(a) = v} n_{t(a)} q_{t(a)} - \sum_{a \in A | t(a) = v} n_{h(a)} q_{n(a)} = 0.
\end{equation}
Quiver gauge theories describing the low energy effective field theory of D-branes at a Calabi-Yau singularity have a variant of the Green-Schwarz mechanism to cancel the anomalous $U(1)$'s.
The gauge fields of the anomalous $U(1)$'s couple to RR-form fields giving them St\"{u}ckelberg masses \cite{Douglas:1996sw, Ibanez:1998qp, Antoniadis:2002cs}.  These massive vector fields decouple in the IR.  The non-anomalous $U(1)$ fields are IR free so they also decouple and become global $U(1)$ symmetries in the IR.  This is explained from a large-volume perspective in \cite{Jockers:2004yj, Buican:2006sn, Martelli:2008cm}.
%%%%%%%%%%%%%%%%%%%%%%%%%%%%%%%%%%%
% Non-commutative Resolutions of Singularities
%%%%%%%%%%%%%%%%%%%%%%%%%%%%%%%%%%%
\section{Non-commutative Resolutions of Singularities}
\label{sec:torus}
D-branes act as remarkable probes of geometry.  Berenstein and Leigh \cite{Berenstein:2001jr} proposed that D-branes could be used to construct non-commutative resolutions of singularities.  A giant step toward the realization of their proposal was Van den Bergh's \cite{MR2077594} algebraic definition of a non-commutative crepant resolution.
\begin{defn}[\cite{MR2077594} section 4.1]
A non-commutative crepant resolution of a normal Gorenstein domain $R$ is a homologically homogeneous $R$-algebra of the form
$$A = \End_R(M)$$ where $M$ is a reflexive $R-$module.
\end{defn}
In the next section we will see how non-commutative resolutions of the $L^{a,b,c}$ singularities constructed by Van den Bergh can be used to determine a quiver gauge theory.  The $L^{a,b,c}$ singularities can be described by a geometric invariant theory quotient of the form $(\C^{4} - Z)/ \C^{*},$ where $Z$ is a set of points that must be removed to form a good quotient.  These spaces can equivalently be characterized as the moduli space of vacua in Witten's GLSM construction \cite{Witten:1993yc}.  We specify the $\C^{*}$ action on the ring $S = \C[\alpha_1, \dots , \alpha_m, \beta_1, \dots \beta_n]$ by
\begin{align*}
z \cdot \alpha_i & = z^{a_i} \alpha_i,  \text{ for } a_i \in \Z^{+} \\
z \cdot \beta_i & = z^{b_i} \beta_i,  \text{ for } b_i \in \Z^{-}
\end{align*}
where $z$ is the $\C^{*}$ coordinate.  To form a good geometrical invariant theory quotient we must excise either the set $Z = \left\{\alpha_i = 0 \right\}$ or $Z = \left\{\beta_i = 0 \right\}.$  Call the ring of invariants of $S$ under the $\C^{*} = T$ action $R = S^T.$  All elements of the ring $S$ with total weight $m$ form an $R-$module, denoted $S(m).$  Cox \cite{MR1299003} showed that these modules can be used to construct sheaves on the resolved geometry $(\C^{4} - Z)/ \C^{*}.$  We will compute directly with the modules, instead of their corresponding sheaves, to simplify the computation of the endomorphism algebra.  Write the sum of the positive and negative weights as $N^{+} = \sum_{i} a_i$ and $N^{-} = - \sum_{i} b_i$ respectively.  When $N^{+} = N^{-}$, the quotient $(\C^{n} -Z)/\C^{*}$ will have a Calabi-Yau resolution.  We restrict attention to this case and write $N$ for the common value of $N^{+}$ and $N^{-}$.
Van den Bergh's final result of \cite{MR2077594}  is:
\begin{thm}
\label{thm:vdb}
If $\sum_{i} a_i = - \sum_{i} b_i \equiv N,$ there are at least two positive and two negative weights, and the greatest common divisor of all of the weights is one, then the ring of invariants $R = S^T$ is Gorenstein and has a non-commutative crepant resolution
$$A = \End_R\left( \oplus_{m = 0}^{N-1} S(m) \right).$$
\end{thm}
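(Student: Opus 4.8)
\emph{Overview and Step 1 (Gorenstein property; the resolution).} The plan is to realize $A$ as the endomorphism algebra of a tilting bundle on a crepant (orbifold) resolution $\pi\colon X\to\Spec R$, to extract finite global dimension from the resulting derived equivalence, and to read off the remaining homological condition from the Calabi--Yau equality $N^+=N^-$. First I would invoke the classical criterion that the invariant ring of a one-parameter torus acting linearly on a polynomial ring is Gorenstein exactly when the weights sum to zero; here $\sum_i a_i+\sum_i b_i=N^+-N^-=0$, so $R=S^T$ is Gorenstein. Next, form $X=(\C^{m+n}-Z)\git\C^*$ with, say, $Z=\{\alpha_1=\dots=\alpha_m=0\}$. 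Having at least two positive weights (and two negative) forces $\mathrm{codim}\,Z\ge 2$; the gcd hypothesis is the nondegeneracy condition that makes the generic orbit free, so $X$ is a smooth Deligne--Mumford stack (a toric orbifold) and $\pi$ is projective and birational, with $\dim X=\dim R=m+n-1$. Since the top exterior form on $\C^{m+n}$ carries $\C^*$-weight $\sum_i a_i+\sum_i b_i=0$, the relative canonical bundle of $\pi$ is trivial, so $\pi$ is crepant and $\omega_X\cong\O_X$; moreover $X$ is covered by the torus-fixed affine charts $\{\alpha_i\neq 0,\ \beta_j\neq 0\}$.

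\emph{Step 2 (vanishing lemma and the tilting bundle).} Let $\O_X(m)$ be the line bundle obtained by descending the trivial bundle twisted by the character $z\mapsto z^m$, and set $T=\bigoplus_{m=0}^{N-1}\O_X(m)$. The engine of the proof is the cohomology statement
\[
H^{>0}\bigl(X,\O_X(\ell)\bigr)=0\quad\text{for}\quad -N<\ell<N,\qquad H^{0}\bigl(X,\O_X(\ell)\bigr)=S(\ell)\ \ \text{for all }\ell\in\Z,
\]
the latter because $Z$ has codimension $\ge 2$ and $X$ is normal. I would prove the vanishing by computing $H^{*}(X,\O_X(\ell))$ from the \v{C}ech complex of the affine cover above, which identifies the higher groups with weight-$\ell$ graded pieces of the local cohomology modules $H^{*}_Z(S)$; the range $|\ell|<N$ is precisely the grade-restriction window in which these pieces vanish, and the equality $N^+=N^-=N$ is exactly what makes the window symmetric and of the right size. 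Granting this, $\Ext^{>0}_X(T,T)=\bigoplus_{i,j=0}^{N-1}H^{>0}\bigl(X,\O_X(j-i)\bigr)=0$ since $|j-i|\le N-1$. Moreover $T$ generates $D^b(\mathrm{coh}\,X)$: the tautological sections $\alpha_1,\dots,\alpha_m$ and $\beta_1,\dots,\beta_n$ fit into Koszul-type complexes yielding a Beilinson-style resolution of the diagonal on the toric orbifold, expressing every $\O_X(\ell)$ in terms of the $N$ summands $\O_X(0),\dots,\O_X(N-1)$, while the $\O_X(\ell)$, $\ell\in\Z$, generate $D^b(\mathrm{coh}\,X)$. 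Hence $R\Hom_X(T,-)$ is an equivalence $D^b(\mathrm{coh}\,X)\simeq D^b(\mathrm{mod}\,\End_X(T))$; in particular $\operatorname{gl.dim}\End_X(T)=\dim X=\dim R<\infty$.

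\emph{Step 3 (identification with $A$; the NCCR conditions).} From the lemma, $\Hom_X\bigl(\O_X(i),\O_X(j)\bigr)=H^0\bigl(X,\O_X(j-i)\bigr)=S(j-i)$ for $0\le i,j\le N-1$, and since the $S(m)$ are reflexive rank-one $R$-modules and $X$ is birational to $\Spec R$ this equals $\Hom_R\bigl(S(i),S(j)\bigr)$; assembling over $i,j$ gives a ring isomorphism $\End_X(T)\cong\End_R\bigl(\bigoplus_{m=0}^{N-1}S(m)\bigr)=A$. Now $M=\bigoplus_{m=0}^{N-1}S(m)$ is reflexive as a sum of reflexive modules, so by the definition of a non-commutative crepant resolution it remains only to check that $A$ is homologically homogeneous; as $R$ is normal Gorenstein and $\operatorname{gl.dim}A<\infty$, this reduces to $A$ being a maximal Cohen--Macaulay $R$-module. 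That follows from crepancy: a Calabi--Yau resolution of a Gorenstein ring has rational singularities with $R\pi_*\O_X=\O_R$, so $A=R\Gamma\bigl(X,\mathcal{E}nd_{\O_X}(T)\bigr)$ sits in degree $0$, and since $\mathcal{E}nd_{\O_X}(T)$ is locally free on the Cohen--Macaulay stack $X$ with $\pi$ having fibres of dimension $<\dim R$, one obtains $\operatorname{depth}_R A=\dim R$. All hypotheses of the definition are then met.

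\emph{Main obstacle.} The crux is Step 2: establishing $H^{>0}(X,\O_X(\ell))=0$ in the exact window $|\ell|<N$ and, in tandem, that the $N$ consecutive twists $\O_X(0),\dots,\O_X(N-1)$ already generate $D^b(\mathrm{coh}\,X)$. Both facts hinge delicately on $N^+=N^-$ and on there being at least two weights of each sign, and the generation statement demands a genuine resolution-of-the-diagonal computation on the toric orbifold rather than a formal argument. By contrast the Gorenstein criterion, the identification $\End_X(T)\cong A$, and the maximal-Cohen--Macaulay property are routine once the derived equivalence is in hand.
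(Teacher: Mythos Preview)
The paper does not prove this theorem; it is quoted verbatim as ``Van den Bergh's final result of \cite{MR2077594}'' and is used as a black box to supply the tilting module $\mathcal{S}=\bigoplus_{m=0}^{N-1}S(m)$ whose endomorphism algebra is then analysed in later sections. So there is no in-paper proof to compare against.

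That said, your outline is essentially a faithful sketch of Van den Bergh's own argument in the cited reference: realise $X$ as the GIT-quotient stack, exhibit $T=\bigoplus_{m=0}^{N-1}\O_X(m)$ as a tilting bundle via a grade-restriction-window vanishing and a Koszul/Beilinson-type generation statement, identify $\End_X(T)$ with $A$ using $H^0(X,\O_X(\ell))\cong S(\ell)$, and then verify the NCCR axioms. You have correctly located the only genuinely nontrivial step---the cohomology vanishing on the window $|\ell|<N$ together with generation of $D^b(\mathrm{coh}\,X)$ by the $N$ consecutive twists---and your remark that both hinge on $N^+=N^-$ and on there being $\ge 2$ weights of each sign is exactly right. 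One small caution: your argument for the maximal Cohen--Macaulay property of $A$ via $R\pi_*\O_X=\O_R$ is slightly glib, since $X$ is in general only a Deligne--Mumford stack rather than a scheme, so ``rational singularities'' and the depth argument need to be run stack-theoretically; Van den Bergh handles this by working with the coarse space and its associated reflexive sheaves, which you may want to imitate rather than appealing to a schematic crepant resolution that need not exist.
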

Van den Bergh's theorem's relevance for constructing quiver gauge theories was first observed in \cite{Aspinwall:2008jk, Herbst:2008jq}.
The cones over the $L^{a,b,c}$ spaces are simply the quotients with weights $(a_1, a_2, b_1, b_2) =  (a,b,-c,-d)$ where $d = a+b-c$ \cite{Martelli:2005wy}.  The cones over the $Y^{p,q}$ family of singularities are the quotients with weight vectors $(p-q,p+q,-p,-p).$  We will see in the next section how the endomorphism algebra determines a quiver gauge theory for the space $(\C^{4}- Z)/(\C^{*}).$
%%%%%%%%%%%%%%%%%%%%%%%%%%%%%%%%%%%
\section{Quivers from Tilting}
%%%%%%%%%%%%%%%%%%%%%%%%%%%%%%%%%%%
\label{sec:tilting}
Van den Bergh's non-commutative resolution has a simple interpretation as a quiver with relations.  The endomorphism algebra can be described by a quiver with vertices labeled by the modules $S(m), m = 0, \dots N-1$ and arrows labeling irreducible morphisms between the vertices.  We now determine the irreducible morphisms in the endomorphism algebra following Keller's review \cite{keller08}.  The sum $$\mathcal{S} = \bigoplus_{m = 0}^{N-1} S(m)$$ is an example of a ``tilting complex''.
One of the key features of a tiling complex is that the higher Ext groups between its summands vanish.  The full definition of a tilting complexes is given in appendix \ref{app-Morita}.

The irreducible morphisms in $\End_R(\mathcal{S})$ are defined to be
$$\Irr_{\mathcal{S}}(S(k),S(l)) = \frac{\rad_{\mathcal{S}}(S(k),S(l))}{\rad_{\mathcal{S}}^2 (S(k),S(l))}$$
where $\rad_{\mathcal{S}}(S(k),S(l))$ is the vector space of non-isomorphisms between $S(k)$ and $S(l).$  The space $\rad_{\mathcal{S}}(S(k),S(l))^2$ is the vector space of non-isomorphisms from
$S(k)$ to $S(l)$ admitting a non-trivial factorization
$$\rad_{\mathcal{S}}(S(k),S(l))^2 = \sum_{m} \rad(S(m),S(l)) \rad(S(k),S(m)).$$
Since all of the $S^T$ modules are generated by monomials, $\rad_{\mathcal{S}}^2$ will be generated by factorizations of the form $ \rad(S(m),S(l) \rad(S(k),S(m))$
without summing over possible intermediate $\gamma's.$

Determining the irreducible morphisms in the endomorphism algebra is the first step toward showing that the endomorphism algebra is equivalent to a superpotential algebra.  The endormophism algebra $A = \End_R(\mathcal{S})$ can be described by a quiver that has a vertex for each module $S(m)$ in the tilting complex.  The arrows between two vertices $S(k)$ and $S(l)$ in the quiver are chosen to form a basis of the space of irreducible morphisms $\Irr_{\mathcal{S}}(S(k),S(l)).$  Any morphism between two modules can be represented in the quiver as a path between the two modules.  In the next section we will give examples of the irreducible morphisms for two well known geometries.  We will also need to determine the relations in the endomorphism algebra to construct a superpotential that generates the same relations.  The relations in the endomorphism algebra come from two paths that represent the same morphism.  In section \ref{sec:morphisms} we will show the endomorphism algebra is isomorphic to the superpotential algebra of a periodic quiver. 

%%%%%%%%%%%%%%%%%%%%%%%%%%%%%%%%%%%
%%%%%%%%%%%%%%%%%%%%%%%%%%%%%%%%%%%
\section{Some Illustrative Examples}
\label{sec:examples}
%%%%%%%%%%%%%%%%
% Conifold
%%%%%%%%%%%%%%%%%%%%%%%%%%%%%%%%%%%
\subsection{The Conifold}
%%%%%%%%%%%%%%%%%%%%%%%%%%%%%%%%%%%
In all of our examples we will rename the variables $(\alpha_1,\alpha_2,\beta_1,\beta_2) = (\alpha, \beta, \gamma, \delta)$ to simplify notation.
The conifold assigns the variables $(\alpha,\beta,\gamma,\delta)$ the weights $(1,1,-1,-1)$ respectively.  The two vertices correspond to the modules $S(0) = \C[\alpha \gamma,\alpha \delta,\beta \gamma,\beta \delta]$ and $S(1) = (\alpha,\beta) S(0).$  We use the letters $a,b,c,d$ to denote the morphism in $\End(\mathcal{S})$ induced by multiplication by $\alpha,\beta,\gamma$ or $\delta$ respectively.  The irreducible morphisms are shown in figure \ref{fig:conifold}.
\begin{figure}
\begin{center}
\includegraphics[width=12.5cm]{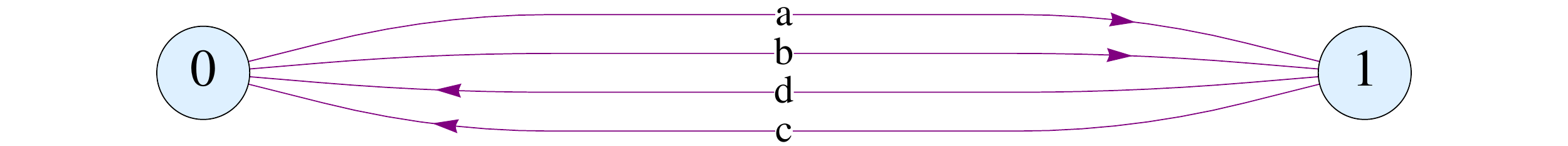}
\caption{Klebanov-Witten quiver for the conifold}
\label{fig:conifold}
\end{center}
\end{figure}
%%%%%%%%%%%%%%%%%%%%%%%%%%%%%%%%%%%
%%%%%%%%%%%%%%%%
\subsection{$Y^{2,1}$}
%%%%%%%%%%%%%%%%%%%%%%%%%%%%%%%%%%%
%%%%%%%%%%%%%%%%%%%%%%%%%%%%%%%%%%%
%      Y^{2,1} Quiver graphic
%%%%%%%%%%%%%%%%%%%%%%%%%%%%%%%%%%%
\begin{figure}[h]
\begin{center}
\includegraphics[trim = 0mm 25mm 0mm 25mm, clip,width=9cm]{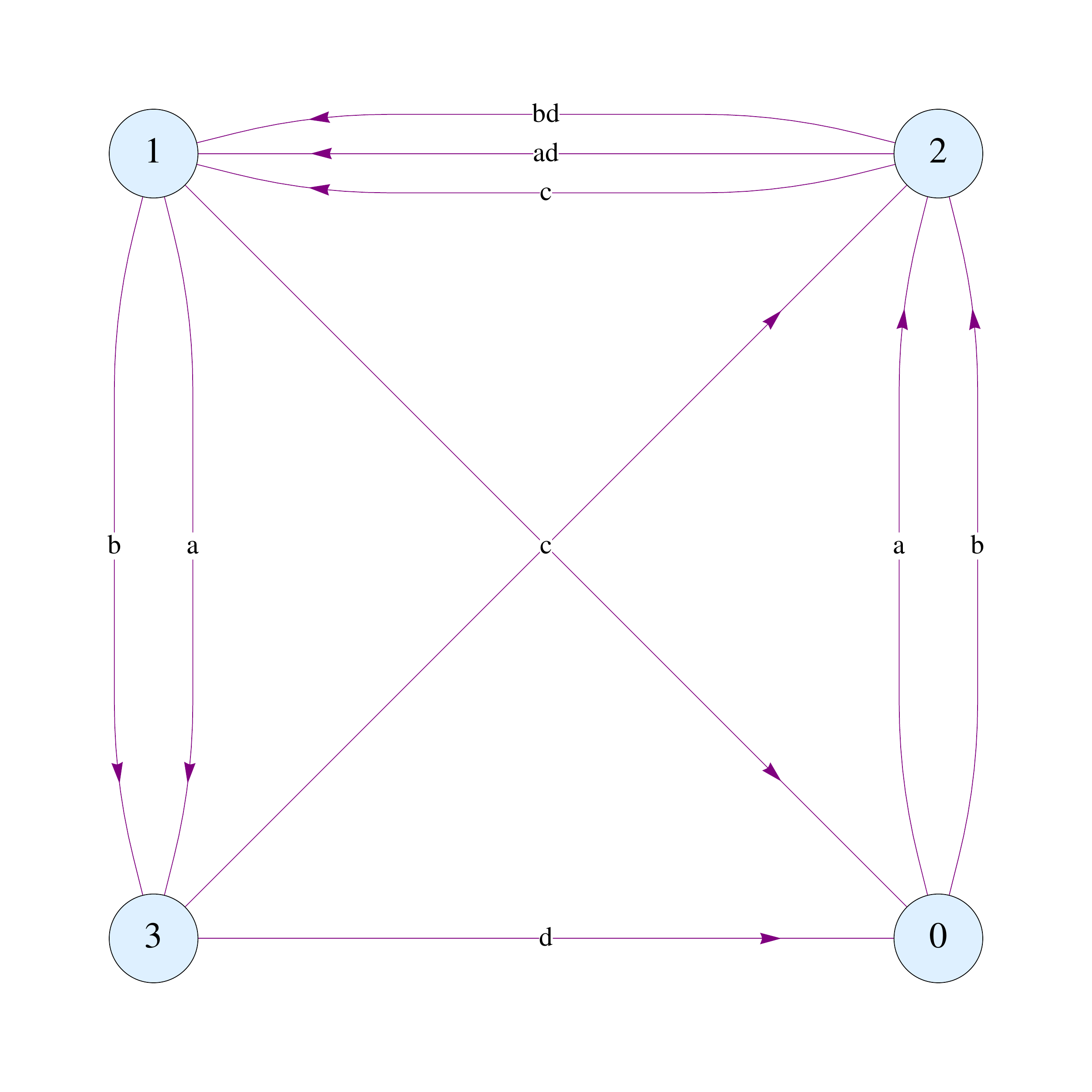}
\caption{Quiver for $Y^{2,1}$}
\label{fig:y21}
\end{center}
\end{figure}
In classical algebraic geometry, the cone over $Y^{2,1}$ is better known as the total space of the anti-canonical bundle over a particular K\"{a}hler base.  The K\"{a}hler base is the one point blow-up of $\P^2$ and is one of two isomorphism classes of degree 8 del Pezzo surfaces.  When the del Pezzo in the total space is blown-down to a point, the resulting geometry can be described by the GLSM that assigns weights $(2,2,-1,-3)$ to $(\alpha,\beta,\gamma,\delta)$.
The ring of invariants $S^{T}$ is generated by nine monomials
$$S^{T} = S(0) = \C[\beta^3 \delta^2, \alpha\beta^2 \delta^2,\alpha^2 \beta\delta^2,\alpha^3 \delta^2,\alpha\gamma^2,\beta\gamma^2,\beta^2 \gamma\delta,\alpha\beta\gamma\delta,\alpha^2\gamma\delta].$$
There are similar expressions for the modules corresponding to the other vertices.  However, we will see that their explicit form is unnecessary to determine the irreducible morphisms.
We depict all irreducible morphisms as arrows in the quiver shown in figure \ref{fig:y21}.  As an example, we now explain why the endomorphism $ad: S(2) \rightarrow S(1)$, induced from multiplication by $\alpha \delta$, is an irreducible morphism.  Multiplication by a single variable $\alpha$ or $\delta$ gives a map from $S(1)$ to $S(4)$ or $S(-1),$ but neither of these modules is a summand in the tilting complex $\mathcal{S} = S(0) \oplus S(1) \oplus S(2) \oplus S(3).$  Therefore $\alpha$ and $\delta$ do not induce morphisms in $\End(\mathcal{S}).$  Since it is impossible to decompose $ad$ into the composition of other morphisms, it is irreducible.  We can similarly  check that all of the other morphisms in figure \ref{fig:y21} are irreducible.  In the next section we will see that these are all of the irreducible morphisms as the consequence of a more general construction.
%%%%%%%%%%%%%%%%%%%%%%%%%%%%%%%%%%%
\section{Constructing the Periodic Quiver}
%%%%%%%%%%%%%%%%%%%%%%%%%%%%%%%%%%%
%%%%%%%%%%%%%%%%%%%%%%%%%%%%%%%%%%%
\label{sec:morphisms}
So far we have described an abstract procedure for determining a quiver with relations for a given $L^{a,b,c}$ singularity.  In this section we will show that the relations can be encoded by the superpotential of a periodic quiver.  Instead of describing the periodic quiver embedded in a torus, we will specify its lift to the universal cover, $\R^2$, of the torus.  We use a construction due to Speyer \cite{MR2317336}, which appeared in a seemingly unrelated context.

Let $\Lambda = \left\{(n,i,j) \in \Z^3 \vert n + i + j \equiv 0\pmod{2} \right\}$ be a face-centered cubic lattice.  The face-centered cubic lattice is tiled by octahedrons and tetrahedrons.  Over each point $(i,j) \in \Z^2$ there are precisely two values of $n$ satisfying
$$0 \le \frac{(2a - N)i + (2c - N)j + nN}{2} < N.$$ 
Of these two values of $n,$ precisely one triple $(n,i,j)$ belongs to the lattice $\Lambda.$  Call this value of $n$ the {\it height} $h(i,j)$ of the point $(i,j) \in \Z^2$ and define $\pi(i,j)$ to be the the unique value of
$\frac{(2a - N)i + (2c - N)j + nN}{2} $ in the range $[0,N).$  At each vertex of $\Z^2$ we associate the module $S(\pi(i,j))$ of semi-invariants.
%%%%%%%%%%%%%%%%%%%%%%%%%%%%%%%%%%%
% Figure for L^{2,6,3}
%%%%%%%%%%%%%%%%%%%%%%%%%%%%%%%%%%%
\begin{figure}
\begin{center}
%% trim left, bottom, right, top  %%
\includegraphics[trim = 0mm 0mm 0mm 0mm, clip,width=12cm]{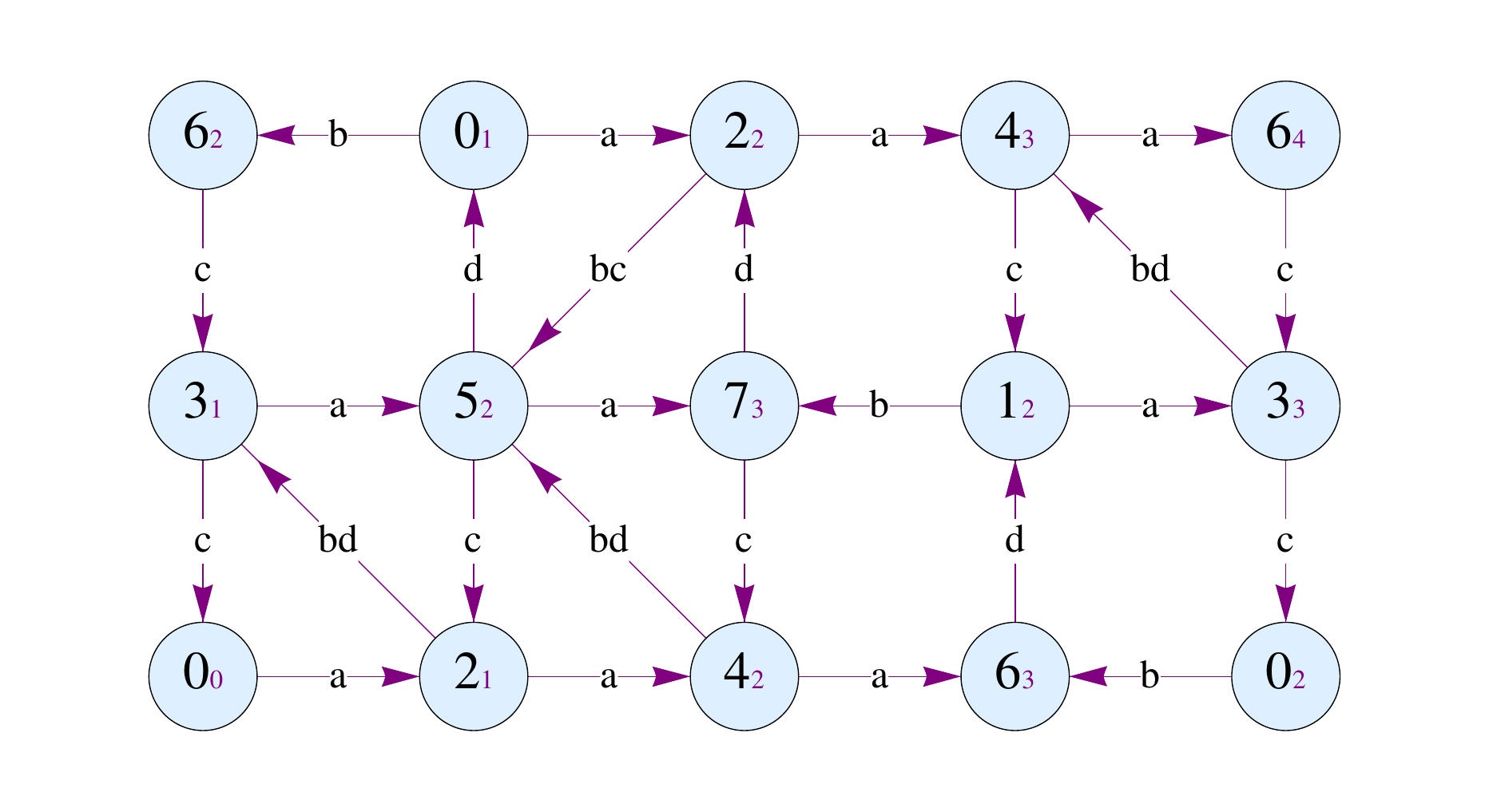}
\caption{Periodic quiver for $L^{2,6,3}$}
\label{fig:l263sq}
\end{center}
\end{figure}
An example of this construction is shown in figure \ref{fig:l263sq} for the $L^{2,6,3}$ singularity.  The lower left vertex has coordinate $(i,j) = (0,0)$, where $i$ is the horizontal coordinate and $j$ is the vertical coordinate. The large vertex labels are the functions $\pi(i,j)$ and the subscripts are the heights $h(i,j).$  We will next describe how the irreducible morphisms are represented by labeled arrows in the quiver. 
\begin{table}[htdp]
\begin{center}
\begin{tabular}{|c|c|}
\hline
 & $(\Delta n,\Delta i, \Delta j)$ \\
 \hline
 $a$ & (1,1,0) \\
 $b$ & (1,-1,0) \\
 $c$ & (-1,0,-1) \\
 $d$ & (-1,0,1) \\
\hline
\end{tabular}
\end{center}
\caption{Irreducible morphisms}
\label{tab:arrow}
\end{table}
Abusing notation, we use the name of a variable in the GLSM to also represent its weight.
Starting at a module $S(k)$, multiplication by a single variable $x = \alpha,\beta, \gamma,$ or $\delta$ induces a morphism in $\End(\mathcal{S})$ if $S(k + x)$ is a summand of the tilting module $\mathcal{S}$.  Multiplication by $x$ therefore induces a morphism if $0 \le k + x \le N-1.$
If $k = \pi(i,j)$, then a case-by-case analysis shows that $x$ is a morphism if and only if the height satisfies $h(i + \Delta i, j + \Delta j) = h(i,j) + \Delta n$, where the values of $(\Delta n, \Delta i, \Delta j)$ are listed in table \ref{tab:arrow}.

We now show all morphisms induced by the product of three or more variables are reducible.  Any morphism $x = x_1x_2 \dots x_n$ with $n \ge 3$ must have a factor $x_i = a,b$ in the horizontal direction and a factor $x_{j} = c,d$ in the vertical direction.  The height change between $h(i,j)$ and $h(i + \Delta i(x_i x_j), j + \Delta j(x_i, x_j))$ is either -2, 0, or 2.  A height change of -2 or 2 forces either of $x_i$ or $x_j$ to be a morphism.  If the height change is 0, then the composite $x_i x_j$ is a morphism.  In either case, $x$ can be factored into the product of morphisms and is therefore reducible. 
Irreducible morphisms of the form $x= ad$ from $\pi(i,j)$ to $\pi(i+1,j+1)$  arise precisely when $h(i + 1,j) - h(i, j+1) = 2.$  The possible differences in the height functions are shown in figure \ref{fig:heights}.  A similar analysis applies to the other three quadrants.  The result is that the types of vertices appearing in the quiver are precisely the same as those appearing in figure 5 of \cite{Franco:2005sm} and figure 4 of \cite{Benvenuti:2005ja}.
%%%%%%%%%%%%%%%%
% Height Changes
%%%%%%%%%%%%%%%%
\begin{figure}[h]
\begin{center}
\includegraphics[width=12cm]{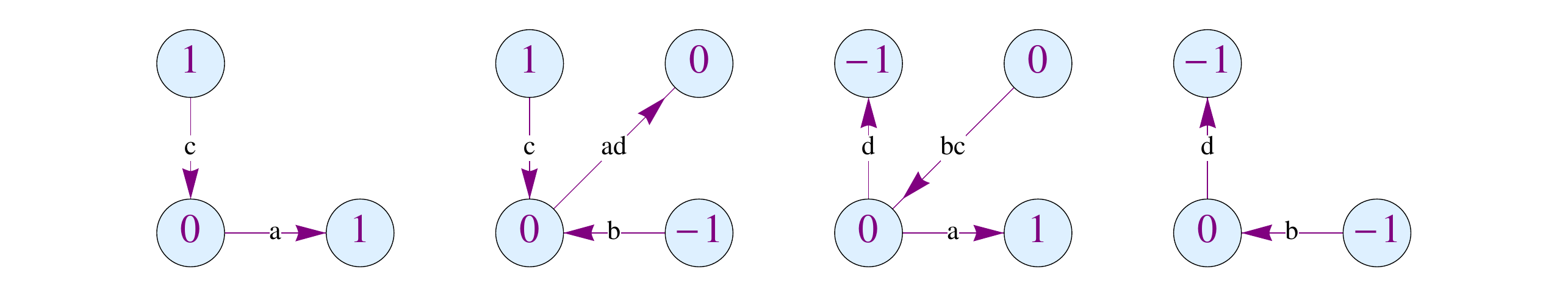}
\caption{Possible differences in height functions}
\label{fig:heights}
\end{center}
\end{figure}

We now show that the endomorphism algebra is isomorphic to the superpotential algebra with the same vertices and arrows, but with relations given by the superpotential
$$W = \sum_{F \in Q_{2}^{+}} w_F - \sum_{F \in Q_2^{-}} w_F.$$  A simple analysis with height functions shows that the product of morphisms around each face is always equal to $abcd.$  Therefore, the relations generated by the partial derivatives of the superpotential are automatically satisfied by the endomorphism algebra.  Conversely, if two morphisms from $S(k)$ to $S(l)$ are equal, then starting from any lift of $S(k)$ to the universal cover, the lifts of the two morphisms both end at the same vertex.  We can deform one path into the other by removing two adjacent faces at a time.  The superpotential relations imply that the two paths bounding the common edge are equal, so we obtain a sequence of equalities showing that the two paths are equal in the superpotential algebra $\C Q/(\partial W).$

\section{R-Charges and $U(1)$ Symmetries}
In this section we will give gauge theory interpretations of the vertex and edge labels used in the construction of the $L^{a,b,c}$ quivers.  If the $L^{a,b,c}$ quiver gauge theory flows to a superconformal field theory in the IR, then at the IR fixed point the NSVZ beta functions all vanish and the gauge groups all have equal rank.  A superconformal theory has global bosonic symmetry subgroup $SO(4,2) \times U(1)_R \subset SU(2,2 | 1)$ of the $\mathcal{N} = 1$ SCFT symmetry supergroup.  The $SO(4,2)$ factor is the ordinary conformal group of Minkowski space and the $U(1)_R$ symmetry rotates the two supersymmetry generators $Q_{\alpha}$ and $Q_{\dot{\alpha}}$ of the $\mathcal{N} = 1$ SUSY algebra into each other.  The $U(1)_R$ symmetry is a linear combination of the global $U(1)$ symmetries that come from the non-anomalous $U(1)$ gauge symmetries in the UV.  Our edge labeling of the $L^{a,b,c}$ quivers leads to a simple parametrization of the possible R-charge assignments for the bi-fundamental fields.

Determining the R-symmetry of a given SCFT is generally a difficult problem.  For a periodic quiver there are some simplifications because the superpotential terms are encoded in the faces of the periodic quiver.  To organize the geometric constraints, we follow the exposition in \cite{Mozgovoyy09} and start by defining a complex of abelian groups 
$$
\begin{CD}
 \Z_{Q_2} @> \partial_2 >> \Z_{Q_1} @> \partial_1>> \Z_{Q_0}  
\end{CD}
$$
where the differentials are $\partial_2(F)  = \sum_{a \in F}$ for $F \in Q_2$ a face and $\partial_1(a) = t(a) - h(a)$ for $a  \in Q$ an arrow.
Since $Q$ is a CW decomposition of the torus, the homology groups of this complex are equal to the homology groups of the torus.
Let $\Lambda$ be the $(|Q_0| + 1)$ dimensional lattice
$$\Lambda = \Z_{Q_1}/\langle \partial_2(F) - \partial_2(G) \;  \vert  \; F,G \in Q_2 \rangle $$
and denote the projection by
$\wt: \Z_{Q_1} \rightarrow \Lambda.$
The function $\wt$ is a weight function on paths that can be used to determine when two paths are equivalent under the F-term relations.
Let $\Lambda^{+} \subset \Lambda$ be the semigroup generated by the image of $\Z^{\ge 0}_{Q_1}$ under the map $\wt.$
Define the cone $P \subset \Lambda_{\mathbb{Q}}$ by
$$P = \left\{ \sum a_i \lambda_i \; \vert \; a_i \in \mathbb{Q}_{\ge 0}, \lambda_i \in \Lambda^{+} \text{ for all } i \right\}$$
The variety $\Spec \C[ P \cap \Lambda]$ has dimension $(|Q_0| + 2)$ and is the largest irreducible component of the master space \cite{Forcella:2008bb}.  This component is a  $(|Q_0| - 1)$ dimensional fibration
over the mesonic moduli space.  To understand the fibration, define
$B = \ker(\Z_{Q_0} \rightarrow \Z)$ where the map assigns 1 to each vertex.  The algebraic torus $T_B = \Hom_{\Z}(B, \C^{*})$ has a maximal compact subgroup $U(1)_B^{(Q_0 - 1)}$, which is the group of baryonic symmetries of the quiver gauge theory.  Define $M$ by the exact sequence
$$
\begin{CD}
0 @>>> M @> \iota >> \Lambda @> d >> B @>>> 0
\end{CD}
$$
where $d$ is the projection of $d_1$ under $\wt$ and $\iota$ is the inclusion map.  The lattice $M$ is three-dimensional and has a corresponding three-dimensional semigroup $M^{+} = M \cap \Lambda^{+}.$
Define the algebraic torus $T_{\Lambda}= \Hom_{\Z}(\Lambda, \C^{*})$, which is the combined mesonic and baryonic symmetries of the master space.
The $U(1)_M^3$ mesonic symmetries are the compact subgroup of the algebraic torus $T_{M} = T_{\Lambda}/T_{B}.$

The values of $(\Delta n, \Delta i, \Delta j)$ assigned to the irreducible morphisms are mesonic charges of the corresponding bifundamental fields.  Denote the mesonic $U(1)_M$ charges $(Q_N, Q_H, Q_V)$ where $H$ and $V$ stand for horizontal and vertical respectively.  The net change $(\Delta n, \Delta i, \Delta j)$ between two representatives of a vertex in the universal cover, $\Z^2$, of the periodic quiver is typically non-zero, so the $U(1)_M$ charges $(\Delta n, \Delta i, \Delta j)$ indeed correspond to mesonic symmetries.  In contrast, baryonic charges are defined to be charges that are equal for all representatives of a vertex.  Therefore the labels $\pi(i,j)$ are baryonic charges.  The baryonic charge of an edge is the difference of the two labels of the vertices the edge connects.  These differences are equal to the labels of the irreducible morphisms.

The chiral superfields are charged under a global $U(1)_R \subset U(1)_M^3$ symmetry.  The $U(1)_R$ charge assignments of the chiral superfields are given by a map
$$R: Q^{1} \rightarrow (0,1]$$
which must satisfy the following two geometric constraints \cite{Franco:2005rj}
\begin{equation}
\label{ra}
\sum_{a \in Face} R(a) = 2 \qquad \text{for all faces } F \in Q_2
\end{equation}
since the superpotential has $R$-charge 2 and
\begin{equation}
\label{rb}
\sum_{\text{edges } a \ni i} (1 - R(a)) = 2 \qquad \text{for all vertices } i \in Q_0
\end{equation}
since the NSVZ beta function is zero for each gauge group at the superconformal fixed point.  Such an R-charge assignment is called a {\it geometrically consistent R-charge} assignment by Mozgovoy \cite{Mozgovoyy09}.

Among all possible geometrically consistent R-charge assignments the one that maximizes the a-anomaly
\begin{equation}
a = \frac{9}{32} \left(Q_0 + \sum_{e \in Q_1} (R(e) - 1)^3  \right)
\end{equation}
is the physical R-charge at the superconformal fixed point \cite{Intriligator:2003jj}.

We now characterize all charge assignments satisfying equations \eqref{ra} and \eqref{rb} using perfect matchings \cite{Hanany:2005ve,Franco:2005rj}.  A {\it perfect matching} $M$ is a subset of the edges of a brane tiling such that every vertex belongs to precisely one edge of $M.$  Any charge assignment satisfying the first constraint \eqref{ra} can be written as a convex combination of charge assignments associated to perfect matchings of the brane tiling.  For every perfect matching $M$, define a function $\delta_M$ that takes the value 2 on the edges of $M$ and 0 on all other edges.  Any charge assignment can be represented by a matrix whose rows and columns  are indexed by the white and black vertices of the brane tiling and whose entries equal the R-charge of the edge connecting any pair of vertices.  King \cite{broomthesis, Forcella:2008bb} observed that the constraint \eqref{ra} implies that the row and column sums of the matrix are 2, which is by definition twice a doubly-stochastic matrix.  Any doubly stochastic matrix is a convex combination of permutation matrices by the Birkhoff-von Neumann theorem, but a permutation matrix is the R-charge assignment $\delta_M$ of a perfect matching $M$ so the result follows. 

To characterize perfect matchings, $M$, whose charge assignments satisfy \eqref{rb}, we introduce the notation of a simple perfect matching.  For any perfect matching $M$ we can construct a representation $\gamma_{M}$ with dimension vector $(1,1, \dots 1)$ of the superpotential algebra $\C Q/I$ by assigning each edge in the perfect matching 1 and all other edges 0.  A perfect matching $M$ is called {\it simple} if the corresponding quiver representation $\gamma_M$ is simple, that is, if $\gamma_{M}$ contains no non-trivial subrepresentations.  Given a perfect matching $M$ define $Q_M$ to be the quiver with the same vertices as $Q$ and only the arrows not contained in $M.$  Then $M$ is simple if and only if $Q_M$ is strongly connected \cite{ishiidimer}, i.e. there is an oriented path from any vertex to every other vertex.  We now state the relation between simple perfect matchings and solutions to $\eqref{rb}.$
\begin{lem}
The charge assignment $\delta_M$ of a simple perfect matching $M$ satisfies $\eqref{ra}$ and $\eqref{rb}.$  Conversely, every charge assignments satisfying $\eqref{ra}$ and $\eqref{rb}$ can be written as convex combinations of the functions $\delta_M$ for simple perfect matchings $M.$
\end{lem}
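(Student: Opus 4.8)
The plan is to handle the two implications separately. For the first, the superpotential constraint \eqref{ra} is essentially the defining property of a perfect matching, whereas the anomaly constraint \eqref{rb} follows from an Euler‑characteristic count on the torus together with the hypothesis that $M$ is simple. For the converse, I would feed the Birkhoff--von Neumann decomposition established above — any charge assignment obeying \eqref{ra} is a convex combination of the $\delta_M$ over \emph{all} perfect matchings — into an analysis of the polytope cut out by \eqref{ra} and \eqref{rb}, and show that only the simple matchings are needed.

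\emph{Forward direction.} For any perfect matching $M$, a face $F\in Q_2$ is dual to a vertex of the brane tiling, exactly one edge of $M$ meets that vertex, so $\sum_{a\in F}\delta_M(a)=2$, which is \eqref{ra}. For \eqref{rb}, fix a vertex $i\in Q_0$ — a face of the brane tiling bounded by $\deg(i)$ edges — let $m_i$ be the number of edges of $M$ on its boundary, and set
\[
n_i(M)\;=\;\tfrac12\sum_{a\ni i}\bigl(1-\delta_M(a)\bigr)\;=\;\tfrac{\deg(i)}{2}-m_i .
\]
These boundary edges of $M$ are pairwise non‑adjacent, so $2m_i\le\deg(i)$ and $n_i(M)$ is a nonnegative integer; and since each brane‑tiling edge bounds two faces while a perfect matching uses half the vertices, summing over all faces gives $\sum_i\deg(i)=2|Q_1|$ and $\sum_i m_i=|Q_2|$, so the torus relation $|Q_2|-|Q_1|+|Q_0|=0$ yields $\sum_{i\in Q_0}n_i(M)=|Q_1|-|Q_2|=|Q_0|$, independently of $M$. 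Now assume $M$ is simple, so $Q_M$ is strongly connected and in particular has no source and no sink. If $n_{i_0}(M)=0$ for some $i_0$, then the edges of $M$ along the boundary of that face form a perfect matching of its (necessarily even) boundary cycle, hence one of its two alternating edge sets; since the arrows of $Q$ incident to a vertex alternate between incoming and outgoing (the faces of $Q$ around $i_0$ alternate between $Q_2^{+}$ and $Q_2^{-}$), deleting these arrows removes either all incoming or all outgoing arrows at $i_0$, making $i_0$ a sink or a source of $Q_M$, a contradiction. Hence $n_i(M)\ge 1$ for all $i$, and together with $\sum_i n_i(M)=|Q_0|$ this forces $n_i(M)=1$ everywhere, which is precisely \eqref{rb}.

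\emph{Converse.} Since \eqref{ra} and \eqref{rb} are affine and \eqref{ra} forces $0\le R(a)\le 2$, the set $S$ of all $R\ge 0$ satisfying \eqref{ra} and \eqref{rb} is a polytope, which contains every $\delta_M$ with $M$ simple by the first part; as convex combinations preserve both affine constraints, it is enough to show that the vertices of $S$ are exactly these simple $\delta_M$. Expanding a vertex $R$ of $S$ via Birkhoff--von Neumann as an average of $\delta_M$ over perfect matchings $M$ supported inside $\operatorname{supp}(R)$, and examining which of these decompositions admit a perturbation that stays in $S$, the claim should reduce to the single implication: a perfect matching $M$ with $n_i(M)=1$ for all $i$ — equivalently, one for which $Q_M$ has neither a source nor a sink — is simple, i.e.\ $Q_M$ is strongly connected. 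This last step is the crux: ``no source and no sink'' does not imply strong connectivity for an arbitrary quiver, but for a dimer‑model quiver with a perfect matching removed a separating cut would have to wind around a nontrivial cycle of the torus, which the consistency of the brane tiling forbids; I would obtain this from the characterization of simple perfect matchings in \cite{ishiidimer} (or reprove that homological statement directly). The Euler count and the Birkhoff--von Neumann bookkeeping being routine, this implication is the main obstacle.
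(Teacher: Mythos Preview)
Your forward direction is correct and is essentially the paper's argument: both prove $\sum_{i\in Q_0} n_i(M)=|Q_0|$ via the Euler relation on the torus, observe that simplicity forces $n_i(M)\ge 1$, and conclude equality everywhere. Your source/sink phrasing is a cleaner articulation of what the paper means by ``a simple perfect matching leaves all the faces of the brane tiling connected'': if a $2n$-gon face had $n$ matched edges, the dual vertex would lose either all incoming or all outgoing arrows in $Q_M$, contradicting strong connectivity. So on this half you and the paper agree.

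For the converse, note first that the paper's proof block treats only the forward implication; it offers no argument for the ``conversely'' clause at all. Your sketch therefore already goes further than the paper does. That said, it has a genuine gap. The assertion ``it is enough to show that the vertices of $S$ are exactly these simple $\delta_M$'' is correct, but your Birkhoff--von~Neumann plus perturbation outline does not establish it. Intersecting the Birkhoff polytope $\{R\ge 0:\text{\eqref{ra}}\}$ with the affine subspace cut out by \eqref{rb} can create new vertices that are strict convex combinations of several $\delta_{M_j}$ with the individual $n_i(M_j)$ not all equal to $1$; perturbing the coefficients $\lambda_j$ then moves you off the subspace \eqref{rb}, so the ``$R$ is a vertex $\Rightarrow$ only one $M$ in the decomposition'' step does not go through as written. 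You would need an additional argument that any such $R$ also lies in the convex hull of the simple $\delta_M$, which is really the content of the converse.

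Separately, the implication you isolate as the ``crux'' --- that $n_i(M)=1$ for all $i$ forces $Q_M$ to be strongly connected --- is not true for an arbitrary bipartite torus graph; it requires a consistency hypothesis on the brane tiling (as in Ishii--Ueda or Broomhead), which the lemma as stated does not assume. So even modulo the polytope issue, your reduction needs that hypothesis made explicit.
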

\begin{proof}
Since a simple perfect matching $M$ leaves all the faces of the brane tiling connected, each face in the brane tiling with $2n$ edges can have at most $n-1$ edges contained in $M.$  We rewrite this condition as 
$\sum_{e \in F} (1 - \delta_M(e)) \le 2.$  Following Gulotta's proof of Theorem 3.9 in \cite{Gulotta:2008ef}, we sum both sides of the inequality over all faces in the brane tiling
\begin{align}
\sum_{F  \in G_2} \sum_{e \in F} (1 - \delta_M(e)) & \le \sum_{F \in G_2} 2 \\
2 E - 2 \sum_{F  \in G_2} \# (e \in edges(F) \cap M) & \le 2F
\end{align}
but $ \sum_{F  \in G_2} \# (e \in edges(F) \cap M) = V$ since the number of edges in the brane tiling equals half the number of vertices.  Since the Euler characteristic of the torus is zero, $2E - 2V = 2F$, which forces  all of the inequalities to be equalities.  Therefore $\sum_{e \in F} (1 - \delta_M(e)) = 2$ for all faces in the brane tiling and $\delta_{M}$ satisfies \eqref{rb}.
\end{proof}
For the $L^{a,b,c}$ family of quivers we have constructed each superpotential term has R-charge $R(a) + R(b) + R(c) + R(d).$  Therefore the set of all edges containing a fixed variable $a$ $b$, $c$ or $d$ form a perfect matching $M.$  All four of these perfect matchings are simple so for any assignment of R-charges satisfying $R(a) + R(b) + R(c) + R(d) = 2,$ both \eqref{ra} and \eqref{rb} hold. 
%%%%%%%%%%%%%%%%%%%%%%%%%%%%%%%%%%%
\begin{table}[htdp]
\begin{center}
\begin{tabular}{|c|c|c|c|c|c|c|c|}
\hline
Field & Morphism & Multiplicity & $Q_B$ & $Q_N$ & $Q_H$ & $Q_V$ & R-charge \\
 \hline
 $\rightarrow$ & $a$ & $\beta$  & $+\alpha$ & 1 & 1 & 0 & $R(a)$ \\
 $\leftarrow$ & $b$  & $\alpha$ & $+\beta$ & 1 & -1 & 0 & $R(b)$ \\
 $\uparrow$ & $c$ & $\delta$ & $-\gamma$ & -1  & 0  & 1 & $R(c)$ \\
 $\downarrow$ & $d$ & $\gamma$ & $-\delta$ & -1 & 0 & -1 & $R(d)$ \\
 $\nwarrow$ & $bd$ & $\beta-\gamma$ & $\beta-\delta$ & 0 & -1 & 1& $R(b) + R(d)$  \\
 $\swarrow$ & $bc$ & $\beta-\delta$ & $\beta-\gamma$& 0 & -1 & -1 & $R(b) + R(c)$ \\
\hline
\end{tabular}
\end{center}
\caption{Charge assignments and multiplicities for basic fields \cite{Franco:2005sm, Benvenuti:2005ja, Butti:2005vn}.}
\label{tab:charge}
\end{table}%
%%%%%%%%%%%%%%%%%%%%%%%%%%%%%%%%%%%
The multiplicities of fields with common R-charge assignments together with their mesonic and baryonic charges are listed in table \ref{tab:charge} and are in perfect agreement with the results in \cite{Franco:2005sm, Benvenuti:2005ja, Butti:2005vn}.
%%%%%%%%%%%%%%%%%%%%%%%%%%%%%%%%%%%
%%%%%%%%%%%%%%%%%%%%%%%%%%%%%%%%%%%
\section{Seiberg Dualities and Cascades}
%%%%%%%%%%%%%%%%%%%%%%%%%%%%%%%%%%%
\label{sec:Seiberg}
Seiberg duality is a powerful tool in the study of $\mathcal{N} = 1$ gauge theories.  It is an equivalence between the IR dynamics of two different $\mathcal{N} = 1$ gauge theories with different UV descriptions \cite{Seiberg:1994pq}.  Taking the coupling of a gauge group to infinite coupling results in theory that has a dual description called its Seiberg dual.  The dual theory has as fundamental degrees of freedom dual quarks and mesons.  In a quiver gauge theory, we can view Seiberg duality as an operation on a single gauge group.  In terms of the quiver gauge theory, Seiberg duality at a vertex $k$ can be described by the following procedure \cite{Franco:2005rj}:
\begin{itemize}
\item Replace the gauge group at node $k$ with rank $N_C$ by one of rank $N'_C = N_F - N_C$ where $N_F$ is the number of flavors  and $N_C$ is the rank of the old gauge group.
\item Replace all quarks charged under the gauge group at node $k$ by ``dual-quarks'' transforming in the conjugate representation.  This corresponds to inverting all incoming and outgoing arrows at the node $k$. \item For every quark anti-quark pair $(Q_{\alpha} ,\widetilde{Q}^{\beta})$ represented by arrows $i \rightarrow k$ and $k \rightarrow j$, form a composite meson $M_{\alpha} ^{\beta} = Q_{\alpha} \widetilde{Q}^{\beta}.$  Represent this new meson by an arrow $i \rightarrow j$ in the Seiberg dual quiver.  The meson is now neutral under the gauge group of vertex $k$ and charged under the gauge groups corresponding to vertices $i$ and $j.$
\item Add cubic superpotential terms $\Delta W = Q'^{a} M_{\alpha}^{\beta} \widetilde{Q}'_{\beta}$ coupling the mesons and the dual quarks $Q'$.
\item Replace every composite operator $Q_{\alpha} \widetilde{Q}^{\beta}$ by the corresponding meson in the superpotential.  This step can cause cubic superpotential terms to turn into mass terms for some of the fields.  Integrate out these massive fields to find the superpotential of the Seiberg dual theory.
\end{itemize}
%%%%%%%%%%%%%%%%
%%% Seiberg Duality on Brane Tiling
%%%%%%%%%%%%%%%%
\begin{figure}
\begin{center}
%%% Before Duality
\subfigure[Before Duality]{\includegraphics[width=6cm]{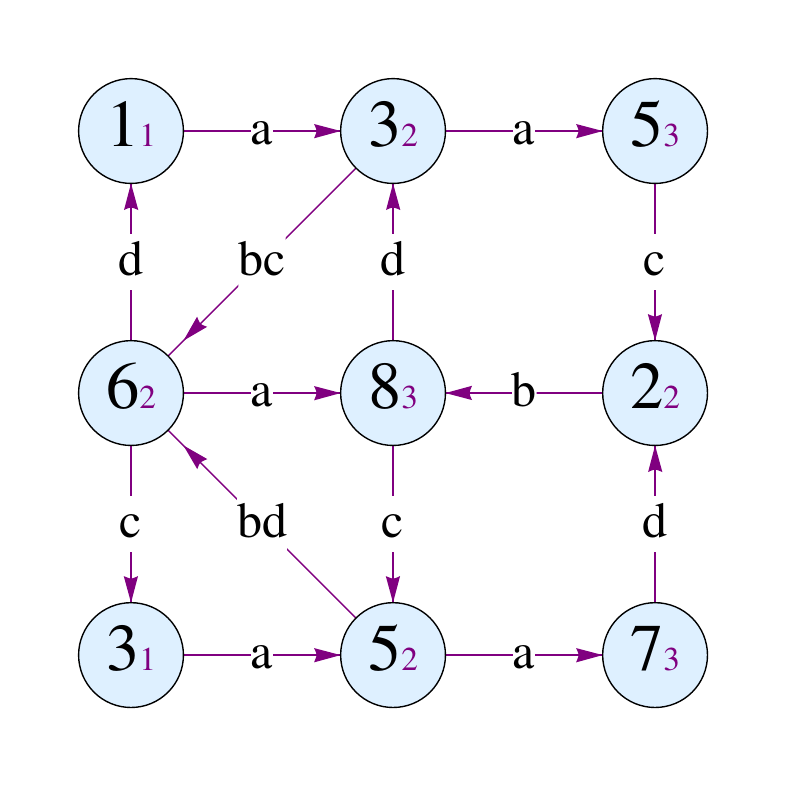}}
%%% After Duality
\qquad
\subfigure[After Duality]{\includegraphics[width=6cm]{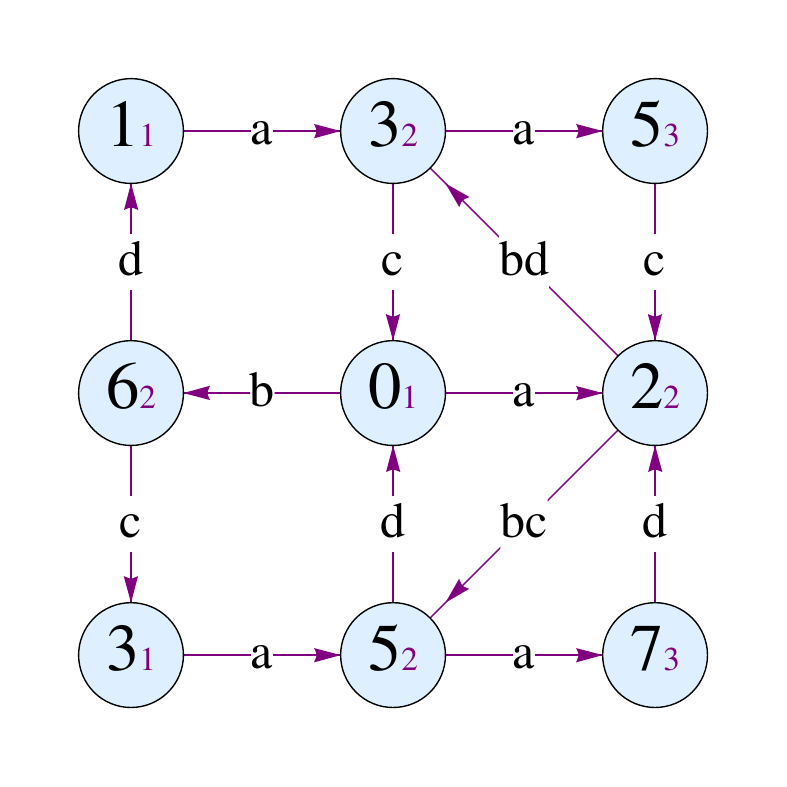}}
\caption{Seiberg duality applied to node 0}
\label{fig:SDtiling}
\end{center}
\end{figure}
%%%%%%%%%%%%%%%%%%%%%%%%%%%%%%%%%%%
Seiberg dualities that preserve the planar structure of the quiver are particularly simple to analyze.  The Seiberg dual of a periodic quiver is planar if the vertex being acted upon has two incoming and two outgoing arrows.  For the $L^{a,b,c}$ family, these vertices are precisely the local minima and maxima of the height function.  Seiberg duality on a local maximum acts by decreasing the height function by 2, which decreases the vertex's label $\pi(i,j)$ by $N$, as illustrated in figure \ref{fig:SDtiling}.

Performing a Seiberg duality on the vertex with $\pi(i,j) = N-1$ yields an identical quiver after cyclicly relabeling the nodes $N-1 \rightarrow N-2, N-2 \rightarrow N-3, \dots 0 \rightarrow -1.$  Seiberg duality changes the window of allowed values of $\pi(i,j)$ from $[0, N-1]$ to $[-1, N-1].$  When the ranks of all the gauge groups are equal, the gauge group couplings do not run and are parameters of the theory.  Taking one of the couplings to infinity results in a strongly coupled theory that has the same IR physics as its weakly coupled Seiberg dual. 

A collection of fractional branes must satisfy the condition for vanishing gauge anomalies \eqref{eqn:rank}.  When the ranks of the gauge groups are unequal, the NSVZ beta functions no longer vanish and the gauge group couplings can run.  When one of the gauge group couplings becomes infinitely strong, Seiberg duality replaces the gauge group by one of lower rank, which is weakly coupled compared to the other gauge groups.  Any rank assignment that is a linear combination of the non-anomalous $U(1)$ symmetries with total weight zero satisfies equation \eqref{eqn:rank} \cite{Herzog:2003dj}.  Therefore $n_v = N_C + k \pi_v$ is a valid rank assignment, where $\pi_v$ is the common label $\pi(i,j)$ for the vertex $v$'s representatives in the universal cover of the quiver.  Seiberg duality acting on the gauge groups with the largest rank decreases its rank by $kN$ and yields an identical quiver after relabeling the nodes, but with all the ranks of the gauge groups decreased by $k$.  This is the first step in proving the existence of the duality cascade whose existence was conjectured in \cite{Butti:2005sw,Brini:2006ej,Evslin:2007au}.  Checking that this procedure results in a duality cascade would require finding initial conditions for the gauge couplings that allow them to become strongly coupled in the correct order for the duality cascade to proceed.  We expect that the cascade would be qualitatively similar to the $Y^{p,q}$ cascades \cite{Herzog:2004tr, Benvenuti:2004wx, Brini:2006ej}.
%Related phenomenon have been investigated in the context of helices \cite{MR1230966} and cluster algebras \cite{fordy09}.  
%An amusing description of Seiberg dualities comes from a construction of Andre Henriques used to study the octahedron recurrence  \cite{MR2335700}.  The convex hull of the height function $h: \Z^2 \rightarrow \Z$ %bounds the tops of octahedron in the lattice $\Lambda = \left\{(n,i,j) \in \Z^3 \vert n + i + j \equiv 0\pmod{2} \right\}.$  Seiberg duality corresponds to (periodically) inserting or removing an octahedron from a local minima %or maxima respectively.

Berenstein and Douglas \cite{Berenstein:2002fi} interpret Seiberg dualities in terms of tilting equivalences of derived categories using Rickard's theory of derived Morita equivalence \cite{MR1002456}.  Parts of Rickard's theory are introduced in appendix \ref{app-Morita}.  Using a Koszul exact sequence, it should be possible to verify that the Seiberg duality operation described here corresponds to a tilting equivalence.  

%%%%%%%%%%%%%%%%%%%%%%%%%%%%%%%%%%%
\section{Conclusion}
%%%%%%%%%%%%%%%%%%%%%%%%%%%%%%%%%%%
\label{sec:conclusion}
We have derived a very simple description of the $Y^{p,q}$ and $L^{a,b,c}$ quiver gauge theories and have found that they all possess periodic Seiberg dualities.   Our description of the periodic quiver was based on a construction of Speyer \cite{MR2317336} used to give a combinatorial proof of the integrality of the terms in the Gale-Robinson recurrence
$$g(n+N) = \frac{g(n+a) g(n+b) + g(n+c)g(n+d)}{g(n)}.$$
Speyer showed the terms of this sequence count perfect matchings on a subset of the the $L^{a,b,c}$ brane tiling.  For example, the recurrence for the conifold is $g(n+2)g(n) = 2 g(n+1)^2$, which counts the number of domino tilting of an order $n$ Aztec diamond \cite{MR1226347}.  Exploring the relationship between the domino tilings of Speyer and the perfect matchings arising in the study of non-commutative Donaldson-Thomas invariants \cite{MR2403807,Mozgovoy:2008fd,Ooguri:2008yb} is an exciting direction for future study.

Superpotential algebras consisting of vertical, horizontal, and diagonal edges have been previously studied in \cite{Beil:2008ck} in connection with the $Y^{p,q}$ family of quivers.  Perhaps the height function introduced in this paper could help shed new light on these non-commutative algebras.

Our graphical presentation of the quiver gauge theory may help show that these quivers are equivalent to the quivers obtained from algorithms using coamoebas or alga \cite{Feng:2005gw}.  Hanany and Vegh developed a method \cite{Hanany:2005ss} to determine the brane tiling for any local toric Calabi-Yau threefold using the normal vectors to the boundary of its toric diagram.  Their method is a conjectural way of extracting a gauge theory on a Calabi-Yau from the coamoeba of its mirror Calabi-Yau.  Developing the relationship between the quivers obtained from the theory of tilting and those obtained from coamoebas will hopefully lead to a deeper understanding of the physics of D3-branes at a toric Calabi-Yau singularity. 
%Coamoebas were originally defined by Passare and Tsikh (unpublished) also were rediscovered and named alga in \cite{Feng:2005gw}.  A refined version of their procedure was formulated by Ueda and Yamazaki 
%\cite{Ueda:2006jn}.

%%%%%%%%%%%%%%%%%%%%%%%%%%%%%%%%%%%
\section{Acknowledgments}
%%%%%%%%%%%%%%%%%%%%%%%%%%%%%%%%%%%
%%%%%%%%%%%%%%%%%%%%%%%%%%%%%%%%%%%
I would like to thank Paul Aspinwall, David Berenstein, Aaron Bergman, Sebastian Franco, and David Speyer for discussions that helped improve this paper.  I would like to thank MSRI for inviting me to participate in the Tropical Structures in Geometry and Physics conference where part of this work was completed.  Finally, I would like to thank David Morrison for his detailed reading of this manuscript and numerous suggestions.  This research was supported in part by the National Science Foundation under Grant No. DMS-0606578.
%%%%%%%%%%%%%%%%%%%%%%%%%%%%%%%%%%%
\appendix
%%%%%%%%%%%%%%%%%%%%%%%%%%%%%%%%%%%
%    Tiling 
%%%%%%%%%%%%%%%%%%%%%%%%%%%%%%%%%%%
\section{Tilting Equivalences}
\label{app-Morita}
Rickard's theory of derived Morita equivalence helps us understand when two different rings $R$ and $S$ have the same derived categories of modules.  We will apply this technology to rings that are path algebras of quivers.  A nice introduction to the theory of tilting is the book \cite{MR1649837}.  An introduction to tilting in the context of Seiberg duality is given by Vit\'{o}ria \cite{MR2488553} and has been recently generalized by Keller and Yang \cite{keller08, kelleryang}.

Let $R$ be and ring and denote by $P(R)$ the category of finitely generated right projective modules over $R.$
\begin{defn}
A {\it tilting complex} over a ring $R$ is an object $T$ of the bounded homotopy category $K^b(P(R))$ such that
\begin{itemize}
\item $\Hom_{K^b(P(R))}(T,T[i]) = 0$ for all $i \neq 0$
\item $T$ generates $K^b(P(R))$ as a triangulated category.
\end{itemize}
\end{defn}
\begin{thm}[Rickard]
Let $R$ and $S$ be two rings.  Then $D^b(R)$ is derived equivalent to $D^b(S)$ if and only if there exists a tilting complex $T$ over $R$ such that
$$S \cong \End_{K^b(R)}(T)^{op}$$
\end{thm}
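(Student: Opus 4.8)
The plan is to prove the two implications separately, following Rickard's original strategy of realizing a derived equivalence as the ``derived Hom/tensor'' functor attached to the tilting complex.

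For the forward direction, suppose $T \in K^b(P(R))$ is a tilting complex and set $S = \End_{K^b(R)}(T)^{op}$. The natural right action of $S$ on $T$ should make $T$ a complex of $R$-$S$-bimodules (after a projective/cofibrant replacement that promotes the endomorphism \emph{ring} to the endomorphism \emph{DG algebra}), and I would study the functor $G = \mathrm{RHom}_R(T,-)\colon D(R)\to D(S)$ together with its left adjoint $-\otimes^{\mathbf L}_S T$. First I would check that $G(T)\simeq S$ in $D(S)$: the self-orthogonality hypothesis $\Hom_{K^b(R)}(T,T[i])=0$ for $i\neq 0$ makes $\mathrm{RHom}_R(T,T)$ formal, quasi-isomorphic to $S$ concentrated in degree $0$. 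Then the induced maps $\Hom_{D(R)}(T[m],T[n])\to \Hom_{D(S)}(S[m],S[n])$ are isomorphisms for all $m,n$, again by self-orthogonality; a standard triangulated bootstrap --- the full subcategory on which $G$ is fully faithful is thick and contains $T$ --- upgrades this to full faithfulness on $\langle T\rangle = K^b(P(R))$. Finally, since $T$ generates $K^b(P(R))$ and $S$ generates $K^b(P(S))$, essential surjectivity follows, and $G$ (or its adjoint) restricts to an equivalence $D^b(R)\simeq D^b(S)$.

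For the converse, given an equivalence $F\colon D^b(S)\xrightarrow{\sim} D^b(R)$, I would put $T = F(S)$ with $S$ regarded as the rank-one free module in degree $0$ and verify the three required properties: (i) $\Hom_{D^b(R)}(T,T[i])\cong \Hom_{D^b(S)}(S,S[i])=\Ext^i_S(S,S)$ vanishes for $i\neq 0$, so $T$ is self-orthogonal; (ii) $T$ generates $K^b(P(R))$ because $S$ generates $K^b(P(S))$ and $F$ is exact and fully faithful; and (iii) $T$ actually lies in $K^b(P(R))$, because $S$ is a compact object of $D(S)$, equivalences send compact objects to compact objects, and the compact objects of $D(R)$ are exactly, up to isomorphism, the perfect complexes $K^b(P(R))$. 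Tracking conventions then yields $\End_{K^b(R)}(T)\cong\End_{D^b(S)}(S)\cong S^{op}$, i.e. $S\cong \End_{K^b(R)}(T)^{op}$.

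The main obstacle is the forward direction, and specifically endowing $T$ with a genuine $R$-$S$-bimodule (or DG-bimodule) structure: the hypotheses only supply the \emph{one-sided} datum of a complex over $R$ together with an abstract ring $S$ acting through the homotopy category, whereas even to define $G$ one needs $T$ promoted to a two-sided object. Rickard's resolution is to pass to a DG model: replace $\End$ by the endomorphism DG algebra $\mathcal E = \mathrm{REnd}_R(T)$, observe that self-orthogonality makes $\mathcal E$ formal with $H^0(\mathcal E)\cong S$, so $D(\mathcal E)\simeq D(S)$, and then use the compact generator $T$ to identify $D(\mathcal E)$ with the thick subcategory it generates inside $D(R)$. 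Making this formality argument precise (equivalently, constructing a two-sided tilting complex of $R$-$S$-bimodules), together with the bounded-versus-unbounded bookkeeping needed to conclude $D^b(R)\simeq D^b(S)$ rather than merely an equivalence of perfect subcategories, is where essentially all the work lies; the verification of the individual axioms in both directions is then routine by comparison.
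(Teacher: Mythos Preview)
The paper does not prove this theorem: it is stated in the appendix as a known result of Rickard (cited as \cite{MR1002456}) and is used only as background for the discussion of Seiberg duality as a tilting equivalence. There is therefore no ``paper's own proof'' to compare your proposal against.

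That said, your sketch is a faithful outline of Rickard's original argument. The converse direction is essentially complete as you wrote it. For the forward direction you have correctly isolated the genuine obstacle: promoting the one-sided tilting complex $T$ to a two-sided object so that $\mathrm{RHom}_R(T,-)$ lands in $D(S)$ rather than merely in $D(\mathcal{E})$ for the endomorphism DG algebra $\mathcal{E}$. Your proposed resolution via formality of $\mathcal{E}$ (using the self-orthogonality hypothesis to collapse $\mathcal{E}$ to its $H^0$) is the standard modern repackaging, due essentially to Keller; Rickard's original 1989 proof is more hands-on and constructs the bimodule complex directly, but the content is the same. One small caution: the step from an equivalence of perfect subcategories $K^b(P(R))\simeq K^b(P(S))$ to an equivalence $D^b(R)\simeq D^b(S)$ is not automatic for arbitrary rings (one needs some finiteness, e.g.\ that both rings are coherent or Noetherian, or one works with $D(\mathrm{Mod})$ rather than $D^b(\mathrm{mod})$); you flagged this as ``bookkeeping'' but it does require an argument.
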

Vitoria shows that a tilting complex that generates Seiberg duality at node $k$ is
$$T = \bigoplus_{i=1}^n T_i$$
where 
$T_i = 0 \rightarrow P_i \rightarrow 0$
is the stalk complex for $i \neq k$ and for $i = k,$
$$T_k = 0 \rightarrow \bigoplus_{j \rightarrow k} P_j \rightarrow P_k \rightarrow 0.$$
%%%%%%%%%%%%%%%%%%%%%%%%%%%%%%%%%%%
% Bibliography
%%%%%%%%%%%%%%%%%%%%%%%%%%%%%%%%%%%
%\bibliography{physics,math}
\bibliography{BraneTiling}
\end{document}